\newcommand{\bb}{\bm{\beta}}
\newtheorem{proposition}{Proposition}
\newtheorem{example}{Example}
\newtheorem{definition}{Definition}
\newtheorem{remark}{Remark}
\newtheorem{corollary}{Corollary}
\DeclareMathOperator*{\argmin}{arg\,min}
\newcommand{\M}{\mathbf{M}}
\newcommand{\y}{\mathbf{y}}
\renewcommand{\P}{\mathbf{P}}
\title{\vspace{-2cm}Lost in translation: On the impact of data coding on penalized regression with interactions}
\author{Johannes W R Martini$^{1,2}$ \and Francisco Rosales$^{3}$ \and Ngoc-Thuy Ha$^2$ \and Thomas Kneib$^4$ \and  Johannes Heise$^5$ \and  Valentin Wimmer$^1$}
\date{\small $^1$ KWS SAAT SE, Einbeck, Germany \\
  $^2$ Department of Animal Breeding and Genetics, University of Goettingen, Göttingen, Germany \\
$^3$ Department of Finance, Universidad del Pac\'ifico, Lima, Peru \\
$^4$ Chairs of Statistics and Econometrics, University of Goettingen, Göttingen, Germany \\
$^5$ IT Solutions for Animal Production (vit), Verden, Germany}
\begin{document}

\maketitle

\begin{abstract}
Penalized regression approaches are standard tools in quantitative genetics.  
It is known that the fit of an \emph{ordinary least squares} (OLS) regression is independent of certain transformations of the coding of the predictor variables, and that the standard mixed model \emph{ridge regression best linear unbiased prediction} (RRBLUP) is neither affected by translations of the variable coding, nor by global scaling. 
However, it has been reported that an extended version of this mixed model, which incorporates interactions by products of markers as additional predictor variables is affected by translations of the marker coding. In this work, we identify the cause of this loss of invariance in a general context of penalized regression on polynomials in the predictor variables. We show that in most cases, translating the coding of the predictor variables has an impact on effect estimates, with an exception when only the size of the coefficients of monomials of highest total degree are penalized. The invariance of RRBLUP 
can thus be considered as a special case of this setting, with a polynomial of total degree 1, where the size of the fixed effect (total degree 0) is not penalized but all coefficients of monomials of total degree 1 are.
The extended RRBLUP, which includes interactions, is not invariant to translations because it does not only penalize interactions (total degree 2), but also additive effects (total degree 1). Our observations are not restricted to ridge regression, but generally valid for penalized regressions, for instance also for the $\ell_1$ penalty of LASSO.
The fact that coding translations alter the estimates of interaction effects, provides an additional reason for interpreting the biological meaning of estimated interaction effects with caution. Moreover, this problem does not only apply to gene by gene interactions, but also to other types of interactions whose covariance is modeled with Hadamard products of covariance matrices (for instance gene by environment interactions).
\end{abstract}

\newpage 
\section*{Background}
Genomic prediction is the prediction of properties of individuals from their genomic data. It is a crucial ingredient of modern breeding programs \cite{meuwissen01,schaeffer06,habier07,hayes09a,hayes13}.
The traditional quantitative genetics theory is built upon linear models in which allele effects are mostly modeled additively \cite{falconer96}.
In more detail, the standard model to represent the effect of the genotype on the phenotype is given by
\begin{equation} \label{eq:01}
\mathbf{y}=\mathbf{1}_n \mu + \M \bm{\beta} + \bm{\epsilon},
\end{equation} 
where $\mathbf{y}$ is the $n \times 1$ vector of the phenotypic observations of $n$ individuals and $\mathbf{1}_n$ an $n \times 1$ vector with each entry equal to $1$.  Moreover,
$\mu$ is the $y$-intercept, and $\M$  the  $n \times p$ matrix describing the marker states of $n$ individuals at $p$ loci. Dealing with single nucleotide polymorphisms (SNPs) and a diploid species, the entries $M_{i,j}$ can for instance be coded as $0$ $(\mathbf{aa})$, $1$ $(\mathbf{aA}$ or $\mathbf{Aa})$ or $2$ $(\mathbf{AA})$ counting the occurrence of the reference allele $\mathbf{A}$. The $p \times 1$ vector $\bm{\beta}$ represents the allele substitution effects of the $p$ loci, and $\bm{\epsilon}$ the $n \times 1$ error vector. For single marker regression, which may for instance be used in genome-wide association studies (GWAS), we could apply ordinary least squares regression to estimate (or to predict) $\hat{\beta}$. 
However, in approaches of genomic prediction, we model the effects of many different loci simultaneously and the number of markers $p$ is usually much larger than the number of observations $n$. To reduce overfitting and to deal with a large number of predictor variables, different methods have been applied in the last decades, among which \emph{ridge regression best linear unbiased prediction} (RRBLUP) is the most popular \cite{schaeffer2004application}. RRBLUP penalizes the squared $\ell_2$ norm of $\bm{\beta}$ and has been built on the additional model specifications of $\mu$ being a fixed unknown parameter, $\bm{\beta}\sim \mathcal{N}(\bm{0},\sigma_\beta^2 \mathbf{I}_p)$ and $\bm{\epsilon}\sim \mathcal{N}(\bm{0},\sigma_{\bm{\epsilon}}^2 \mathbf{I}_n)$.
With an approach of maximizing a certain density, these assumptions allow to derive the optimal penalty factor as the ratio of the variance components $\lambda:= \frac{\sigma_{\bm{\epsilon}}^2}{\sigma_\beta^2}$ \cite{henderson75,henderson76,henderson77}. %In practice these variance components are usually estimated from the data, albeit the theory to derive the optimal penalty is based on the assumption of $\sigma_\beta^2$ and $\sigma_{\bm{\epsilon}}^2$ being known. 
Please note that RRBLUP is not a pure ridge regression, but actually a mixed model in which the size of $\mu$ is not penalized, but the entries of $\bm{\beta}$ are.  
This mixed model RRBLUP is also called \emph{genomic best linear unbiased prediction} (GBLUP) when it is reformulated with $\mathbf{g}:=\M \bm{\beta}$, and thus $\mathbf{g}\sim \mathcal{N}(0,\sigma_\beta^2 \M' \M)$. \\

It is known that translations of the marker coding, that is, subtracting a constant $m_i$ from the $i$-th column of $\M$, does not change the predictions $\hat{\y}$ of an  OLS regression (provided it is well-defined). This invariance also holds for RRBLUP, when the penalty factor remains fixed. Also when modeling interactions by products of two predictor variables, that is when fitting the coefficients of a polynomial of total degree 2 to the data, OLS predictions are not affected by translations of the marker coding. Contrarily, the predictions of its penalized regression analogue \emph{extended genomic best linear unbiased prediction} (eGBLUP) are sensitive to a translation of the coding \cite{he2016does,Martini17}.\\

In this work, we address the question of why the penalized regression method is affected by translations of the marker coding when a polynomial function of higher total degree is used. 
We  start with a short summary of the different methods.

\section*{Theory: Specification of regression methods}
If an expression includes an inverse of a matrix, we implicitly assume that the matrix is invertible for the respective statement, also if not mentioned explicitly. 
Analogously, some statements for OLS may implicitly assume that a unique estimate exists, which in particular restricts to cases in which the number of observations is at least the same as the number of parameters that have to be determined. \vspace{0.4cm} \\
{\bf Additive effect regression}\vspace{0.2cm}\\ 
The additive effect model has already been presented in Eq.~(\ref{cor:01}). \vspace{0.1cm}\\
{\bf OLS} 
The ordinary least squares approach determines $\hat{\bm{\beta}}$ by minimizing the sum of squared residuals (SSR): 
\begin{equation}\label{eq:03}
\begin{pmatrix}
\hat{\mu} \\
\hat{\bm{\beta}}
\end{pmatrix}_{\mbox{\small OLS}} := \argmin_{(\mu,\bm{\beta}) \in \mathbb{R}^{p+1}} \sum\limits_{i=1}^n (y_i - \M_{i,\bullet}\bm{\beta} - \mu)^2
\end{equation}
$\M_{i,\bullet}$ denotes here the $i$-th row of $\M$ representing the genomic data of individual $i$. The solution to the minimization problem of Eq.~(\ref{eq:03}) is given by the well-known OLS estimate
\begin{equation}\label{eq:04}
\begin{pmatrix}
\hat{\mu} \\
\hat{\bm{\beta}}
\end{pmatrix}_{\mbox{\small OLS}} =
\left( 
\begin{pmatrix}
\mathbf{1}_n & \hspace{-0.2cm}\M
\end{pmatrix}^{t}
\begin{pmatrix}
\mathbf{1}_n & \hspace{-0.2cm} \M
\end{pmatrix} \right)^{-1} 
\begin{pmatrix}
\mathbf{1}_n & \hspace{-0.2cm} \M
\end{pmatrix}^{t}
\mathbf{y}
\end{equation}
provided that the required inverse exists, which in particular also means that $n$ has to be greater than $p$.\\

In problems of statistical genetics, we often deal with a high number of loci and a relatively low number of observations.
In this situation of $p \geq n$, the solution to Eq.~(\ref{eq:03}) is not unique but a vector subspace of which each point minimizes Eq.~(\ref{eq:03}) to zero (``overfitting'').
Using an arbitrary value of this subspace, predictions $\hat{\mathbf{y}}$ for genotypes which have not been used to estimate the parameters $(\hat{\mu},\hat{\beta})$ usually have a low correlation with the corresponding realized phenotypes. An approach to prevent overfitting is RRBLUP.
\vspace{0.4cm}\\
{\bf RRBLUP / GBLUP} minimizes 
\begin{equation}\label{eq:05}
\begin{pmatrix}
\hat{\mu} \\
\hat{\bm{\beta}}
\end{pmatrix}_{\mbox{\small RR}_\lambda} :=\argmin_{(\mu,\bm{\beta}) \in \mathbb{R}^{p+1}} \sum\limits_{i=1}^n (y_i - \M_{i,\bullet}\bm{\beta} - \mu)^2  + \lambda \sum\limits_{j=1}^p {\beta}_j^2
\end{equation}
for a penalty factor $\lambda > 0$. Using an approach of maximizing the density of the joint distribution of $(\mathbf{y},\bm{\beta})$, the model specifications of ${\beta_i}\stackrel{i.i.d.}{\sim} \mathcal{N}(0,\sigma_\beta^2)$ and ${\bm{\epsilon}_i}\stackrel{i.i.d.}{\sim} \mathcal{N}(0,\sigma_{\bm{\epsilon}}^2)$ allow to determine the penalty factor as ratio of the variance components  as $\lambda := \frac{\sigma_{\bm{\epsilon}}^2}{\sigma_\beta^2}$. We stress again that Eq.~(\ref{eq:05}) is not a pure ridge regression, as the name RRBLUP might suggest, but a mixed model which treats $\mu$ and $\bb$ differently by not penalizing the size of $\mu$. This is the version, which is most frequently used in the context of genomic prediction (often with additional fixed effects). \\
The corresponding solution is given by 
\begin{equation}
\begin{pmatrix}
\hat{\mu} \\
\hat{\bm{\beta}}
\end{pmatrix}_{\mbox{\small RR}_\lambda} =
\left( 
\begin{pmatrix}
\mathbf{1}_n & \hspace{-0.2cm}\M
\end{pmatrix}^{t}
\begin{pmatrix}
\mathbf{1}_n & \hspace{-0.2cm} \M
\end{pmatrix} + \lambda \begin{pmatrix}
0 & \mathbf{0 }_p^t \\
\mathbf{0 }_p & \mathbf{I}_p
\end{pmatrix} 
\right)^{-1} 
\begin{pmatrix}
\mathbf{1}_n & \hspace{-0.2cm} \M
\end{pmatrix}^{t}
\mathbf{y}. \label{eq:06}
\end{equation}
where $\mathbf{0 }_p$ denotes the $p \times 1$ vector of zeros. The effect of the introduction of the penalization term $\lambda \sum\limits_{j=1}^p {\beta}_j^2$ is that for the minimization of Eq.~(\ref{eq:05}), we have a trade-off between fitting the data optimally and shrinking the squared effects to $0$. The method will only ``decide'' to increase the estimate $\hat{\beta}_j$, if the gain from improving the fit is greater than the penalized loss generated by the increase of $\hat{\beta}_j$.  \vspace{0.4cm} \\
{\bf First order epistasis: Polynomials of total degree two}\vspace{0.2cm}\\ 
An extension of the additive model of Eq.~(\ref{eq:01}) is a first order epistasis model given by a polynomial of total degree two in the marker data \cite{ober15,jiang15,Martini16}
\begin{equation} \label{eq:07}
y_i =\mathbf{1}_n \mu + \M_{i,\bullet}\bm{\beta} + \sum\limits_{k=1}^p\sum\limits_{j=k+1}^p h_{j,k}M_{i,j}M_{i,k} + \bm{\epsilon},
\end{equation}
Please note that there is a variant of this model, in which also $j=k$ is included. This interaction of a locus with itself allows to model dominance \cite{Martini16}. \\

We recapitulate some terms which are important in the context of polynomials in multiple variables. Each product of the variables $M_{i,1},M_{i,2},\dots,M_{i,p}$ is called a monomial.
For instance $M_{i,1}$, $M_{i,2}$, $M_{i,1} M_{i,2}$ and $M_{i,1}^2$ are four different monomials. Since the product is commutative, $M_{i,1} M_{i,2}$ and $M_{i,2} M_{i,1}$ are the same monomial (and their coefficients are assumed to be summed up in any polynomial which we will address later).
Moreover, the total degree of a monomial is the sum of the powers of the variables in the respective monomial. 
For instance, $M_{i,1}$ and $M_{i,2}$ are monomials of total degree 1, whereas $M_{i,1}$$M_{i,2}$, and $M_{i,1}^2$ are monomials of total degree 2.
Moreover, $M_{i,1}M_{i,2}$ is a monomial of degree 1 in each variable $M_{i,1}$ and $M_{i,2}$ and $M_{i,1}^2$ is a monomial of degree 2 in $M_{i,1}$ and of degree $0$ in $M_{i,2}$. 
Since a polynomial model is also linear in the coefficients, the regression equations are only slightly modified.\\

{\bf OLS} Eq.~(\ref{eq:04}) with a modified matrix $\M$ including the products of markers as additional predictor variables represents the OLS solver for model~(\ref{eq:07}). \vspace{0.4cm}\\
{\bf eRRBLUP}
The extended RRBLUP is based on Eq.~(\ref{eq:07}) and the assumptions of $\mu$ being fixed, ${\beta_i}\stackrel{i.i.d.}{\sim} \mathcal{N}(0,\sigma_\beta^2)$, $h_{j,k }\stackrel{i.i.d.}{\sim} \mathcal{N}(0,\sigma_h^2)$ and ${{\bm{\epsilon}}_i}\stackrel{i.i.d.}{\sim} \mathcal{N}(0,\sigma_{\bm{\epsilon}}^2)$. In this case, the solution is also given by an analogue of Eq.~(\ref{eq:06}), but with two different penalty factors, $\lambda_1:=\frac{\sigma_{\bm{\epsilon}}^2}{\sigma_\beta^2}$ for additive effects and $\lambda_2:=\frac{\sigma_{\bm{\epsilon}}^2}{\sigma_h^2}$ for interaction effects.
\vspace{0.4cm}\\
{\bf Translations of the marker coding}\vspace{0.2cm}\\
In quantitative genetics, column means are often subtracted from the original $0$, $1$, $2$ coding of $\M$ to 
use $\tilde{\M}:=\M-\mathbf{1}_n\P^t$ with $\P$ the vector of column means of $\M$ \cite{vanraden2008efficient} such that 
$$\sum_{i=1}^n \tilde{M}_{i,j} = 0\quad \forall j =1,\dots,p.$$ 
However, other types of translations, for instance a symmetric $\{-1,0,1\}$ coding or a genotype-frequency centered coding \cite{alvarez2007unified,vitezica} can also be found in quantitative genetics' literature. Thus, the question occurs whether this has an impact on the estimates of the marker effects or on the prediction of genetic values of genotypes. \\

The answer is that for the additive setup of Eq.~(\ref{eq:01}), a shift from $\M$ to $\tilde{\M}$ will change $\hat{\mu}$ but not $\hat{\bm{\beta}}$ and any prediction $\hat{\mathbf{y}}$ will not be affected, neither for OLS, nor for RRBLUP (provided that $\lambda$ is not changed). This invariance of the additive model does not hold for the extended RRBLUP method. \\

We give an example and discuss the effect of translations of the marker coding in a more general way afterwards. 
\begin{example}[Translations of the marker coding]\label{ex:06}
	Let the marker data of five individuals with two markers be given:
	$$\mathbf{y}= (-0.72,2.34,0.08,-0.89,0.86)^t \qquad 
	\M = \begin{pmatrix}
	2 & 2 \\
	1 & 2 \\
	2 & 0 \\
	2 & 1 \\
	1 & 0 \\
	\end{pmatrix} $$ 
	Moreover, let us use the original matrix $\M$, and the column mean centered matrix $\tilde{\M}:= \M - \mathbf{1}_5 \underbrace{(1.6,1.0)}_{=:\P^t}$.
	We consider the first order epistasis model 
	$$y_i := \mu + \beta_1 M_{i,1} + \beta_2 M_{i,2}  + h_{1,2} M_{i,1}M_{i,2} + \bm{\epsilon}_i$$
	and estimate the corresponding parameters with i) an OLS regression, ii) a mixed model regression eRRBLUP-1 with $\lambda_1=\lambda_2=1$, and iii) a mixed model regression eRRBLUP-2 with $\lambda_1=0$ and $\lambda_2=1$. The difference between eRRBLUP-1 and eRRBLUP-2 is that the first method penalizes the additive effects and the interaction effect, whereas the latter method only penalizes the interaction effect. The results are reported in Table~\ref{table:example1}. We summarize our observations from the reported results as follows:
	\begin{itemize}	
		\item Comparing the centered and non-centered versions of OLS, the estimates for $\mu$, $\beta_1$ and $\beta_2$ change, but the estimated interaction $\hat{h}_{1,2}$ as well as the prediction of ${\mathbf{y}}$ remains unchanged.
		\item  Comparing the centered and non-centered versions of eRRBLUP-1, both codings give different estimates for all the parameters and these solutions produce different predictions for ${\mathbf{y}}$.
		\item Comparing the centered and non-centered versions of eRRBLUP-2, both codings give different estimates for $\mu$, $\beta_1$ and $\beta_2$, but the same for $h_{1,2}$ and the same predictions for $\y$.
	\end{itemize}
\end{example}
	\begin{table}[ht] \caption{Results from Example 1. ``nc'' denotes the use of the non-centered matrix $\M$ and ``c'' indicates the use of the centered matrix  $\tilde\M$.}\label{table:example1}
	\centering
	\begin{tabular}{|l|r|r|r|r|r|r|}
		\hline
		&\multicolumn{2}{c|}{OLS}&\multicolumn{2}{c|}{eRRBLUP-1} &\multicolumn{2}{c|}{eRRBLUP-2}\\ 
		\hline
		Estimates &nc&c &nc &c &nc&c\\
		\hline
		$\hat{\mu}$&1.83& 0.33& 1.81& 0.33& 2.69 & 0.33\\
		$\hat{\beta_1}$&-0.97 & -2.11& -0.89& -1.15 & -1.54 & -2.11\\
		$\hat{\beta_2}$&1.88 & 0.06 & 0.71& 0.09& 1.03& 0.11\\
		$\hat{h}_{1,2}$&-1.14& -1.14 & -0.48 & -0.57&-0.57 & -0.57\\
		\hline\hline
		$\hat{{\mathbf{y}}}$ &&&&&&\\
		& -0.91& -0.91 & -0.46& -0.27& -0.63& -0.63 \\
		& 2.34& 2.34& 1.39 &1.46&2.06& 2.06\\
		& -0.11& -0.11& 0.03& 0.01&-0.40& -0.40\\
		& -0.51& -0.51& -0.21& -0.13&-0.51& -0.51\\
		& 0.86& 0.86 & 0.92 & 0.59&1.15& 1.15\\              
		\hline 
	\end{tabular}
\end{table}

The different cases presented in Example~\ref{ex:06} have a certain systematic pattern, which we  discuss in the following section. 
%In particular note that the predictions of $\y$ are for $\mbox{\small RR}_{\lambda_h = 1}$ independent of the coding.

%\begin{table}[ht] \caption{Results from Example 1}\label{table:example1}
%\centering
%\begin{tabular}{|l|r|r|r||r|r|r|}
%\hline
%&\multicolumn{3}{c||}{Non-Centred ($\M$)}&\multicolumn{3}{c|}{Centred ($\tilde\M$)}\\ 
%\hline
%Est. Coeffs. &OLS&RRBLUP1&RRBLUP2&OLS&RRBLUP1&RRBLUP2\\
%\hline
%$\mu$&1.83&1.81&2.69&0.33&0.33&0.33\\
%$\beta_1$&-0.97&-0.89&-1.54&-2.11&-1.15&-2.11\\
%$\beta_2$&1.88&0.71&1.03&0.06&0.09&0.11\\
%$h_{1,2}$&-1.14&-0.48&-0.57&-1.14&-0.56&-0.56\\
%\hline\hline
%Prediction &&&&&&\\
%$y$& x& x& x& x& x&x\\
%\hline 
%\end{tabular}
%\end{table}
\newpage
\section*{Results} 
The observations made in Example~\ref{ex:06} are explained by the following proposition
which has several interesting implications. 
\begin{proposition}\label{prop:01} Let $\M_{i,\bullet}$ be the $p$ vector of the marker values of individual $i$ and let $f(\M_{i,\bullet}): \mathbb{R}^p \rightarrow \mathbb{R}$ be a polynomial of total degree $D$ in the marker data. Moreover, let $\tilde{\M}:= \M - \mathbf{1}_n \P^t$ be a translation of the marker coding and let us define a polynomial $\tilde{f}$ in the translated variables $\tilde{\M}$ by $\tilde{f}(\tilde{\M}_{i,\bullet}):= f(\tilde{\M}_{i,\bullet} +  \P^t)=f(\M_{i,\bullet})$. Then for any data $\y$, the sum of squared residuals (SSR) will be identical 
	$$ \sum_{i=1,...,n} (y_i - f(\M_{i,\bullet}))^2 = \sum_{i=1,...,n} (y_i - \tilde{f}(\tilde{\M}_{i,\bullet}))^2 $$
	and for any monomial $m$ of highest total degree $D$, the corresponding coefficient $a_m$ of $f(\M_{i,\bullet})$ and $\tilde{a}_m$ of $\tilde{f}(\tilde{\M}_{i,\bullet})$ will be identical:
	$$a_m = \tilde{a}_m.$$ 
\end{proposition}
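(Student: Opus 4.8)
The plan is to prove the two assertions separately: the equality of sums of squared residuals is immediate from the definition of $\tilde f$, and the equality of the top-degree coefficients is the classical fact that translating the arguments of a polynomial leaves its leading (highest-total-degree) form unchanged.

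For the SSR identity, by the very definition $\tilde f(\tilde\M_{i,\bullet}) := f(\tilde\M_{i,\bullet} + \P^t)$, and since $\tilde\M = \M - \mathbf{1}_n \P^t$ gives $\tilde\M_{i,\bullet} + \P^t = \M_{i,\bullet}$, we have $\tilde f(\tilde\M_{i,\bullet}) = f(\M_{i,\bullet})$ for every $i$. Hence the two sums agree summand by summand and nothing further is needed.

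For the coefficient identity, I would write $f$ in multi-index notation, $f(\x) = \sum_{|\alpha| \le D} a_\alpha \x^\alpha$, where $\alpha = (\alpha_1,\dots,\alpha_p) \in \mathbb{N}_0^p$, $\x^\alpha := \prod_{j=1}^p x_j^{\alpha_j}$ and $|\alpha| := \sum_j \alpha_j$; the monomials of highest total degree $D$ are exactly the $\x^\alpha$ with $|\alpha| = D$. Substituting $\x = \tilde\x + \P^t$ and expanding each factor $(\tilde x_j + P_j)^{\alpha_j}$ by the binomial theorem gives
$$\tilde f(\tilde\x) = \sum_{|\alpha| \le D} a_\alpha \sum_{\gamma \le \alpha} \Big( \prod_{j=1}^p \binom{\alpha_j}{\gamma_j} P_j^{\alpha_j - \gamma_j} \Big)\, \tilde\x^\gamma ,$$
where $\gamma \le \alpha$ means $\gamma_j \le \alpha_j$ for all $j$. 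Collecting the coefficient of a fixed monomial $\tilde\x^\gamma$, one reads off $\tilde a_\gamma = \sum_{\alpha \,\ge\, \gamma,\ |\alpha| \le D} a_\alpha \prod_{j=1}^p \binom{\alpha_j}{\gamma_j} P_j^{\alpha_j - \gamma_j}$.

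The one step with content is the degree count for $|\gamma| = D$: the condition $\gamma \le \alpha$ forces $|\gamma| \le |\alpha|$, while $|\alpha| \le D = |\gamma|$, so necessarily $|\alpha| = |\gamma|$ and hence $\alpha = \gamma$; the surviving term has $\prod_j \binom{\gamma_j}{\gamma_j} P_j^{0} = 1$, whence $\tilde a_\gamma = a_\gamma$. Intuitively, translating the variables can only push higher-degree monomials down into lower-degree ones, never lift lower ones up, so the highest layer of coefficients is untouched. I do not expect a genuine obstacle: the argument is pure bookkeeping with multi-indices, the only substantive line being the implication $\gamma \le \alpha \Rightarrow |\gamma| \le |\alpha|$ that pins $\alpha = \gamma$ at total degree $D$. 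If a coordinate-free phrasing is preferred, one may instead split $f = f_D + r$ with $f_D$ homogeneous of degree $D$ and $\deg r < D$, observe that $\x \mapsto \x + \P^t$ sends $r$ to a polynomial of degree $< D$ and sends $f_D(\x)$ to $f_D(\x)$ plus terms of degree $< D$ in $\x$, and conclude that the degree-$D$ component of $\tilde f$ is again $f_D$.
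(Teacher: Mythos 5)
Your proof is correct and follows essentially the same route as the paper's: the SSR identity is read off from the definition of $\tilde{f}$, and the coefficient identity comes from expanding each translated monomial and observing that the substitution $\x \mapsto \tilde{\x} + \P^t$ reproduces the original monomial plus only lower-total-degree terms. Your multi-index bookkeeping merely makes explicit the step the paper leaves implicit, namely that no monomial $\alpha \neq \gamma$ with $|\alpha| \leq D$ can contribute to the coefficient of a degree-$D$ monomial $\tilde{\x}^{\gamma}$.
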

\begin{proof}
	The fact that the SSR remains the same, results from the definition of the polynomials. To see that the coefficients of monomials of highest total degree D are identical, choose a monomial $m(M_{l_1},M_{l_2},...,M_{l_D})$ of the loci $l_1,...,l_D$ of total degree $D$ of $f$. Multiplying the factors of $m(\tilde{M}_{l_1}+P_{l_1},\tilde{M}_{l_2}+P_{l_2},...,\tilde{M}_{l_D}+P_{l_D})$ gives the same monomial 
	$m(\tilde{M}_{l_1},\tilde{M}_{l_2},...,\tilde{M}_{l_D})$ as a summand of highest total degree, plus additional monomials of lower total degree. Thus, the coefficients of monomials of total degree $D$ remain the same.
\end{proof}
Proposition~\ref{prop:01} has the very simple statement that if we have a certain fit $f$ based on $\M$, and we use the translated marker coding $\tilde{\M}$, the polynomial $\tilde{f}$ will fit the data with the same SSR and with the same predictions $\hat{\y}$ (due to the definition of $\tilde{f}$). Moreover, the coefficients of monomials of highest total degree will be the same.  \\

Since OLS is defined only by the minimal SSR, this also means that it is invariant to any translation of the coding, provided that $\tilde{f}$ of Proposition~\ref{prop:01} is a valid fit. To make sure that $\tilde{f}$ is a valid fit, the possibility to adapt coefficients of monomials of lower total degrees is required.
We cannot adapt the regression completely if certain coefficients are forced to zero by the model structure. If a coefficient is equal to zero in $f$, it may be different from zero in $\tilde{f}$.
We illustrate this with an example. \newpage 

\begin{example}[Models without certain terms of intermediate total degree]\label{ex:07}
	Let us consider the data $\M$ and $\y$ of Example \ref{ex:06} but with the assumption that marker $2$ does not have an additive effect. Then
	$$\begin{pmatrix}
	\hat{\mu} \\
	\hat{\beta}_1 \\
	\hat{h}_{1,2}
	\end{pmatrix}_{\mbox{\small OLS}} = \begin{pmatrix}
	3.71 \\
	-2.098\\
	-0.012\\
	\end{pmatrix}
	\qquad \mbox{and} \qquad \begin{pmatrix}
	\tilde{\mu} \\
	\tilde{\beta}_1 \\
	\tilde{h}_{1,2}
	\end{pmatrix}_{\mbox{\small OLS}}= \begin{pmatrix}
	0.334\\
	-2.11 \\
	-1.162
	\end{pmatrix} $$
	and also the estimates $\hat{\y}$ and $\tilde{\y}$ are different. 
\end{example}
Example~\ref{ex:07} illustrates that ``completeness" of the model is required to have the possibility to adapt to translations of the coding. We define this property more precisely.
\begin{definition}[Completeness of a polynomial model]\label{def01}
 Let $\M_{i,\bullet}$ be the $p$ vector of the marker values of individual $i$ and let $f(\M_{i,\bullet}): \mathbb{R}^p \rightarrow \mathbb{R}$ be a polynomial of total degree $D$ in the marker data. The polynomial model $f$ is called complete if for any monomial $\M_{i,j_1}^{d_1} \M_{i,j_2}^{d_2}\cdot \cdot  \cdot \M_{i,j_m}^{d_m}$ of $f$, all monomials 
 $$\M_{i,j_1}^{\delta_1} \M_{i,j_2}^{\delta_2}\cdot \cdot  \cdot  \M_{i,j_m}^{\delta_m} \qquad \forall \; 0 \leq {\delta_1} \leq d_1, \; \forall \; 0 \leq {\delta_2} \leq d_2, \; ...\; , \forall \; 0 \leq {\delta_m} \leq d_m$$ 
 are included with an coefficient to be estimated.
	\end{definition}
Although, this definition seems rather abstract, its meaning can be understood easily by an example. 
Let us consider Eq.~(\ref{eq:07}). Its monomials are of shape $M_{i,k}$ or $M_{i,k}M_{i,l}$. For $M_{i,k}$, Definition~\ref{def01} states that $M_{i,k}^0=1$ and $M_{i,k}^1$ have to be included, which is obviously the case. For $M_{i,k}M_{i,l}$, $M_{i,k}^0=1$, $M_{i,k}^1$ and  $M_{i,k}^1M_{i,j}^1$ have to be included, which is also true. Thus, the model is complete. Analogously, if we also include the interactions $M_{i,k}^2$, the model remains complete. Contrarily, Example~\ref{ex:07} is based on the model 
$$ y_i = \mu+ \beta_1 M_{i,1} + h_{1,2}M_{i,1}M_{i,2} + \epsilon_i.$$
Since $M_{i,1}M_{i,2}$ is included with a coefficient to be estimated, $M_{i,1}$ and $M_{i,2}$ have to be included to make the model complete. Since $M_{i,2}$ is not included, the polynomial is not complete.  \\

Given that the model is complete, Proposition~\ref{prop:01} has various implications.
The following corollaries explain the results observed in our examples and highlight some additional properties of penalized regression methods in general.  
For all statements, it is assumed that penalty factors remain unchanged and that the model is complete.

\begin{corollary}\label{cor:01}
	For a model of any total degree $D$, the OLS estimates of the coefficients of highest total degree as well as the predictions $\hat{\y}$ are invariant with respect to translations of the marker coding. 
\end{corollary}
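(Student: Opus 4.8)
The plan is to derive Corollary~\ref{cor:01} directly from Proposition~\ref{prop:01}, using the completeness hypothesis only to guarantee that the translated polynomial $\tilde f$ stays inside the model class. Write $\mathcal{F}$ for the finite-dimensional space of polynomials in the marker variables whose monomials are exactly those carrying a free coefficient in the model; by Definition~\ref{def01}, $\mathcal{F}$ is described by a set $S$ of monomial exponent vectors that is closed under coordinatewise domination (if $(d_1,\dots,d_m)\in S$ then $(\delta_1,\dots,\delta_m)\in S$ whenever $0\le\delta_k\le d_k$). The OLS estimate for the coding $\M$ is, by definition, the (assumed unique) minimizer over $\mathcal{F}$ of $g\mapsto\sum_i(y_i-g(\M_{i,\bullet}))^2$, and similarly for the coding $\tilde\M$.

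First I would show that the affine substitution $\x\mapsto\x+\P^t$ induces a linear bijection $T$ of the ambient polynomial space, $T(f)=\tilde f$ with $\tilde f(\x):=f(\x+\P^t)$, and that $T$ maps $\mathcal{F}$ onto itself. The inclusion $T(\mathcal{F})\subseteq\mathcal{F}$ is precisely the expansion carried out in the proof of Proposition~\ref{prop:01}: expanding a monomial $\prod_k(\tilde M_{i,l_k}+P_{l_k})^{d_k}$ produces only monomials whose exponent vectors are dominated by $(d_1,\dots,d_m)$, and these lie in $S$ by completeness. Since $T^{-1}$ is the substitution $\x\mapsto\x-\P^t$, the identical argument applied with $-\P$ in place of $\P$ gives $T^{-1}(\mathcal{F})\subseteq\mathcal{F}$, so $T$ restricts to a bijection $\mathcal{F}\to\mathcal{F}$.

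Next I would invoke the two invariance statements of Proposition~\ref{prop:01}. By construction $f(\M_{i,\bullet})=\tilde f(\tilde\M_{i,\bullet})$ for every $i$, so $T$ preserves the vector of fitted values on the data and hence the value of the SSR; and $T$ preserves every coefficient attached to a monomial of total degree $D$. It follows that $\min_{f\in\mathcal{F}}\mathrm{SSR}=\min_{\tilde f\in\mathcal{F}}\mathrm{SSR}$ and that $T$ carries the argmin set for coding $\M$ onto the argmin set for coding $\tilde\M$. Under the standing assumption that the OLS solution is unique, $T$ sends the unique $\M$-minimizer to the unique $\tilde\M$-minimizer; since $T$ leaves the degree-$D$ coefficients and the fitted values unchanged, the OLS estimates of the highest-degree coefficients and the predictions $\hat\y$ agree for the two codings, which is the claim. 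I expect the only real obstacle to be the bookkeeping in the first step — pinning down that ``completeness'' is exactly the closure property needed for $T(\mathcal{F})=\mathcal{F}$ — together with the minor point of stating the result in the well-posed regime; in the rank-deficient case the coefficient assertion must be weakened, though the column space of the design, and hence the attainable $\hat\y$, is still unaffected by $T$.
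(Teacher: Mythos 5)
Your proof is correct and follows essentially the same route as the paper: transport a candidate fit $f$ to $\tilde f$ by the substitution, observe that SSR, fitted values and the top-degree coefficients are preserved (Proposition~\ref{prop:01}), and use completeness to keep $\tilde f$ inside the model class, so the minimizers correspond. The only difference is that you make explicit the two-sided bijection argument with $\pm\P$ (needed so the minima over the two codings are equal rather than merely comparable), a point the paper leaves implicit.
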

Corollary~\ref{cor:01} is a result of the OLS method being defined only by the SSR, and $f$ and the corresponding $\tilde{f}$ of Proposition~\ref{prop:01} fitting the data with the same SSR when their respective coding is used. The statement of Corrolary~\ref{cor:01} has been observed in Example~\ref{ex:06}, where the OLS fits for $\hat{\y}$ are identical when the coding is translated, and where the estimated coefficients $\hat{h}_{1,2}$ of highest total degree remain unchanged. 

\begin{corollary}\label{cor:02} For a polynomial model of total degree $D$, and a penalized regression which only penalizes the coefficients of monomials of total degree $D$, the estimates of the coefficients of monomials of total degree $D$, as well as the predictions $\hat{\y}$ are invariant with respect to translations of the marker coding. 
\end{corollary}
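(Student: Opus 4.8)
The plan is to upgrade Proposition~\ref{prop:01} from a statement about a single fit $f$ to a statement about the whole optimization problem, by exhibiting a bijection between the two search spaces that preserves the penalized objective. Write that objective as $\mathrm{SSR}(f)+\lambda\,\mathrm{pen}\big((a_m)_{\deg m = D}\big)$, where $\mathrm{SSR}(f)=\sum_i (y_i - f(\M_{i,\bullet}))^2$ and the penalty term (the squared $\ell_2$ norm in the eRRBLUP case, but any function of these coefficients works, e.g.\ the $\ell_1$ norm for LASSO) depends on $f$ \emph{only} through the coefficients $a_m$ of the monomials of total degree exactly $D$.

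First I would fix the two search spaces: let $\mathcal{F}$ be the linear space of polynomials in $\M_{i,\bullet}$ containing precisely the monomials prescribed by the (complete) model, and $\tilde{\mathcal{F}}$ the analogous space in the translated variables $\tilde{\M}_{i,\bullet}$. Define $\Phi:\mathcal{F}\to\tilde{\mathcal{F}}$ by $\Phi(f)=\tilde f$ with $\tilde f(\tilde{\M}_{i,\bullet}):=f(\tilde{\M}_{i,\bullet}+\P^t)$, exactly the $\tilde f$ of Proposition~\ref{prop:01}. The step that uses the completeness hypothesis is checking that $\Phi$ actually maps into $\tilde{\mathcal{F}}$: expanding a model monomial $M_{i,j_1}^{d_1}\cdots M_{i,j_m}^{d_m}$ at $\tilde{\M}_{i,\bullet}+\P^t$ produces, with constant coefficients, only monomials $\tilde M_{i,j_1}^{\delta_1}\cdots \tilde M_{i,j_m}^{\delta_m}$ with $0\le \delta_r\le d_r$, and by Definition~\ref{def01} every such monomial again belongs to the model. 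Since $\Phi$ is linear and has the obvious inverse (translate back by $-\P^t$), it is a bijection of $\mathcal{F}$ onto $\tilde{\mathcal{F}}$.

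Then I would read off from Proposition~\ref{prop:01} that $\Phi$ preserves, for every $f$ and the given data $\y$, both $\mathrm{SSR}$ and the vector of degree-$D$ coefficients; as the penalty sees only the latter, the full penalized objective is invariant under $\Phi$, i.e.\ its value at $f$ over $\mathcal{F}$ equals its value at $\Phi(f)$ over $\tilde{\mathcal{F}}$. Hence $f^\star$ minimizes over $\mathcal{F}$ if and only if $\Phi(f^\star)$ minimizes over $\tilde{\mathcal{F}}$, and since $f^\star$ and $\Phi(f^\star)$ share the same predictions $\hat\y$ (by the very construction of $\tilde f$) and the same top-degree coefficients, both quantities are unchanged by the translation — which is the assertion.

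The genuinely delicate point, and the one I would be most careful about, is uniqueness: when the penalty leaves some lower-degree coefficients free, the argmin is a set rather than a point, and the clean statement is that $\Phi$ maps the solution set on one side onto the solution set on the other. To phrase the corollary as an honest ``invariance of the estimate'', I would add that $\hat\y$ is constant across the solution set (strict convexity of $\mathrm{SSR}$ in the fitted values, so any two minimizers with different $\hat\y$ would be beaten by their average) and, for a strictly convex penalty such as ridge, so is the degree-$D$ coefficient vector (strict convexity of $\|\cdot\|_2^2$ restricted to the convex solution set); for LASSO only $\hat\y$ is pinned down this way, but that already suffices for the prediction statement. The completeness bookkeeping in the second paragraph and this uniqueness remark are the only ingredients needed beyond Proposition~\ref{prop:01}, and both are elementary.
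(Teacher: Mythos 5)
Your argument is correct and takes essentially the same route as the paper: the paper's own justification is exactly that $f\mapsto\tilde f$ preserves both the SSR and the coefficients of total degree $D$, hence the entire penalized target function, so the minimizer under one coding corresponds to the minimizer under the other. Your explicit verification that completeness makes this map a bijection between the two model spaces, and your remark on handling a non-unique argmin via convexity, are sound refinements of details the paper leaves implicit.
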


Corollary~\ref{cor:02} is a result of the following observation: for each $f$, its corresponding $\tilde{f}$ will have the same SSR (each polynomial with its respective coding), and the same coefficients of highest total degree. Thus, it will have the same value for the target function which we aim to minimize (The target function is the analogue of Eq.~(\ref{eq:05}) with a penalty on only the coefficients of monomials of highest total degree).
Because this is true for any polynomial $f$, it is in particular true for the solution minimizing the target function. 
A central point of Corollary~\ref{cor:02} is that it is valid for any penalty on the size of the estimated coefficients of highest total degree. The sufficient condition is that only these coefficients of highest total degree are penalized.

\begin{corollary}\label{cor:03}	RRBLUP predictions $\hat{\y}$ are invariant with respect to translations of the marker coding. 
\end{corollary}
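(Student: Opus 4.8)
The plan is to recognize Corollary~\ref{cor:03} as nothing more than the special case $D=1$ of Corollary~\ref{cor:02}. First I would note that RRBLUP, as specified in Eq.~(\ref{eq:05}), fits the polynomial $f(\M_{i,\bullet}) = \mu + \sum_{j=1}^p \beta_j M_{i,j}$, which is of total degree $D=1$. The monomials appearing in $f$ are the constant monomial (carried by the coefficient $\mu$, of total degree $0$) and the $p$ monomials $M_{i,j}$ (carried by the coefficients $\beta_j$, of total degree $1$).

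Next I would check that this model is complete in the sense of Definition~\ref{def01}: for each monomial $M_{i,j}$ of $f$, Definition~\ref{def01} requires $M_{i,j}^0 = 1$ and $M_{i,j}^1$ to be present with coefficients to be estimated, and both are --- the constant $1$ is carried by $\mu$ and $M_{i,j}^1$ by $\beta_j$. Hence the hypotheses ``the model is complete'' and ``penalty factors remain unchanged'' (the latter being part of the RRBLUP setup, where $\lambda$ is held fixed) are both satisfied.

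Then I would verify the remaining hypothesis of Corollary~\ref{cor:02}: that only the coefficients of monomials of highest total degree $D=1$ are penalized. Indeed, the penalty term in Eq.~(\ref{eq:05}) is $\lambda \sum_{j=1}^p \beta_j^2$, which penalizes exactly the coefficients $\beta_j$ of the degree-$1$ monomials $M_{i,j}$ and leaves the degree-$0$ coefficient $\mu$ unpenalized. Applying Corollary~\ref{cor:02} with $D=1$ then yields that the estimates $\hat{\bm{\beta}}$ and the predictions $\hat{\y}$ are invariant under translations of the marker coding, which is the claim; only $\hat{\mu}$ may change, absorbing the translation, exactly as seen in the additive discussion preceding Example~\ref{ex:06}.

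I do not expect any genuine obstacle here, since the entire content is already packaged in Corollary~\ref{cor:02}, and the only task is to identify RRBLUP with an instance of its hypotheses. The one point I would state explicitly is the contrast with eRRBLUP: the degree-$2$ model of Eq.~(\ref{eq:07}) is also complete, but eRRBLUP penalizes both the additive coefficients $\beta_j$ (total degree $1$) and the interaction coefficients $h_{j,k}$ (total degree $2$), so it falls outside the scope of Corollary~\ref{cor:02} and forfeits the invariance --- which is precisely the phenomenon this paper sets out to explain.
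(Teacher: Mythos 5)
Your proposal is correct and matches the paper's own argument: the paper likewise obtains Corollary~\ref{cor:03} by applying Corollary~\ref{cor:02} to the complete degree-$1$ model, noting that RRBLUP penalizes only the coefficients of the degree-$1$ monomials and leaves $\mu$ unpenalized. Your additional verification of completeness and the contrast with eRRBLUP are consistent with the paper's discussion and add no gaps.
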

Corollary~\ref{cor:02} applied to complete models of total degree $1$ gives the result of Corollary~\ref{cor:03}, that is RRBLUP being invariant to translations of the marker coding.
This fact has been previously proven using a marginal likelihood setup \cite{stranden11}, or the mixed model equations \cite{Martini17}.

\begin{corollary}\label{cor:04} An additive least absolute shrinkage and selection operator (LASSO) regression \cite{tibshirani96} based on a polynomial model of total degree 1 and $\ell_1$~penalizing the additive marker effects but not the intercept, is invariant to translations of the marker coding. 
\end{corollary}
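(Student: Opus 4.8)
The plan is to read Corollary~\ref{cor:04} as the case $D=1$ of the argument already used for Corollary~\ref{cor:02}, with the squared $\ell_2$ penalty replaced by the $\ell_1$ penalty. The only thing one really has to verify is that the reasoning behind Corollary~\ref{cor:02} never invoked any property of the penalty other than the fact that it is a function of the coefficients of the monomials of highest total degree alone; the $\ell_1$ norm of $\bm{\beta}$ has exactly this form, since in a model of total degree $1$ the monomials of maximal total degree are precisely the $M_{i,j}$, whose coefficients are the entries of $\bm{\beta}$, while the intercept $\mu$ is the coefficient of the total-degree-$0$ monomial and is left unpenalized.

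First I would record that the additive model $f(\M_{i,\bullet}) = \mu + \M_{i,\bullet}\bm{\beta}$ is complete in the sense of Definition~\ref{def01}: the only submonomial of $M_{i,j}$ besides itself is $M_{i,j}^{0}=1$, i.e.\ the intercept, which is present. Hence Proposition~\ref{prop:01} applies. Given any parameters $(\mu,\bm{\beta})$ and the translated coding $\tilde{\M}=\M-\mathbf{1}_n\P^t$, the polynomial $\tilde{f}(\tilde{\M}_{i,\bullet}):=f(\tilde{\M}_{i,\bullet}+\P^t)$ has parameters $(\tilde{\mu},\tilde{\bm{\beta}})=(\mu+\bm{\beta}^t\P,\ \bm{\beta})$, reproduces the same fitted values, so that $\mathbf{1}_n\mu+\M\bm{\beta}=\mathbf{1}_n\tilde{\mu}+\tilde{\M}\bm{\beta}$, and in particular leaves the highest-degree coefficients $\bm{\beta}$ unchanged.

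Next I would compare the two LASSO objectives $Q_{\M}(\mu,\bm{\beta})=\sum_{i=1}^n (y_i-\mu-\M_{i,\bullet}\bm{\beta})^2+\lambda\sum_{j=1}^p|\beta_j|$ and its analogue $Q_{\tilde{\M}}$ built on $\tilde{\M}$, and observe that the affine map $\phi:(\mu,\bm{\beta})\mapsto(\mu+\bm{\beta}^t\P,\bm{\beta})$ is a bijection of $\mathbb{R}^{p+1}$ satisfying $Q_{\M}(\mu,\bm{\beta})=Q_{\tilde{\M}}(\phi(\mu,\bm{\beta}))$: the residual-sum-of-squares parts agree because $\phi$ preserves fitted values (this is the SSR statement of Proposition~\ref{prop:01}), and the penalty parts agree because $\phi$ leaves the $\beta_j$ untouched and $\lambda$ is held fixed. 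Since $\phi$ maps $\mathbb{R}^{p+1}$ bijectively onto itself, the set of minimizers of $Q_{\tilde{\M}}$ is the image under $\phi$ of the set of minimizers of $Q_{\M}$; the $\bm{\beta}$-components therefore coincide across the two codings, and the fitted vector $\hat{\y}=\mathbf{1}_n\hat{\mu}+\M\hat{\bm{\beta}}=\mathbf{1}_n\hat{\tilde{\mu}}+\tilde{\M}\hat{\bm{\beta}}$ is invariant.

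I do not anticipate a genuine obstacle; the one point that deserves a word of care is that the LASSO minimizer need not be unique when the columns of $\M$ are linearly dependent — the generic situation for $p\ge n$ — so the invariance is best stated at the level of the whole solution set and of the fitted values $\hat{\y}$ rather than of ``the'' estimate. The argument above is framed precisely so that this is harmless, since it transports the entire $\argmin$ set through the bijection $\phi$. It should also be stressed, exactly as in the remark following Corollary~\ref{cor:02}, that nothing here is specific to the $\ell_1$ norm: any penalty that is a function of $\bm{\beta}$ alone, equivalently of the coefficients of the maximal-total-degree monomials, is treated identically.
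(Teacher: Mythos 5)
Your argument is correct and is essentially the paper's: the authors prove Corollary~4 by noting it is the special case of Corollary~2 for total degree $1$ with an $\ell_1$ penalty on the highest-degree coefficients only, which is exactly the bijection-of-objectives argument you spell out (your explicit map $\phi$ and the remark on non-unique LASSO minimizers are welcome refinements, not a different route).
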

Corollary~\ref{cor:04} is a special case of Corollary~\ref{cor:02}. \\

We give a small example, highlighting cases which are not invariant to translations of the marker coding.
We recommend to use the data of Example~\ref{ex:06} to validate the statements.\\

\begin{example}[Regressions affected by marker coding]
	\begin{itemize} \item[a)] Pure ridge regression of an additive model of Eq.~(\ref{eq:01}) with a penalty on the size of $\mu$ (``random intercept'') is not invariant to translations. 
		\item[b)] RRBLUP with the fixed effect forced to zero is not invariant to translations of the marker coding.
		\item[c)] An extended LASSO $\ell_1$ penalizing additive effects and interactions is not invariant to translations of the coding.
	\end{itemize}
\end{example} 

\begin{remark}
	Proposition~\ref{prop:01} stated that the coefficients of monomials of highest total degree $D$ of $f$ and $\tilde{f}$ will be identical. This statement can even be generalized for some situations. Consider for instance the model 
	$$ y_i = f(M_{i,1},M_{i,2},M_{i,3}) = \mu + \beta_1 M_{i,1} + \beta_2 M_{i,2} + \beta_3 M_{i,3} + h_{2,3} M_{i,2}M_{i,3} + \epsilon_i$$
	The model is a polynomial $f$ of total degree 2. Thus, Proposition~\ref{prop:01} states that the coefficient of monomial $M_{i,2}M_{i,3}$ will be identical for $f$ and $\tilde{f}$.
	However, since $M_{i,1}$ is not included in any other monomial, its coefficient will also be identical for both polynomials. We did not generalize Proposition~\ref{prop:01} into this direction to make the manuscript not more technical than necessary. The statement we made in Proposition~\ref{prop:01} is sufficient to explain the observations related to genomic prediction models. 
	\end{remark}

\section*{Discussion}
The illustrated problem of the coding having an impact on the estimates of interactions in penalized regressions is essential for quantitative genetics, where Hadamard products are often used to model interaction such as epistasis or gene by environment interaction \cite{perez17}. Hadamard products of covariance matrices represent exact reformulations of certain interaction effect models \cite{jiang15,Martini16}. In particular, our observations illustrate once more that the size of interaction effect estimates should be interpreted with caution because a biological meaning is not necessarily given. \\ %This observation is also reflected by the fact that estimated variance components do not necessarily have a mechanistic, ``genetic architectural'' meaning \citep{Huang16}.\\

It should be highlighted, that the problem does not seem to be a consequence of non-orthogonality of the predictor variables (marker values and their products), since these problems would not appear in an OLS regression (provided that a unique solution exists), where the variables have the same coding and thus the same angle. \\

Finally, note that it has been reported that a Gaussian reproducing kernel regression \cite{Morota14} can be interpreted as a limit of a polynomial regression with increasing total degree (and all possible monomials) \cite{jiang15}. 
Being a limit case of a method which is affected by translations of the coding, the question appears why the Gaussian kernel regression is invariant to translations of the marker coding. It may be interesting to reconsider the limit behavior from a theoretical point of view.

\section*{Conclusion}
We identified the cause of the coding-dependent performance of epistasis effects models. Our results were motivated by ridge regression, but do equally hold for many other types of penalized regressions, for instance for the $\ell_1$ penalized LASSO. The fact that the estimated effect sizes depend on the coding highlights once more that estimated interaction effect sizes should be interpreted with caution with regard to their biological, mechanistic meaning. Moreover, the problem of coding is not only present for marker by marker interaction, but for any mixed model in which interactions are modeled by Hadamard products of covariance matrices, in particular also for gene by environment (G x E) models.

%\begin{thebibliography}{99}
%\bibitem{abeliovich16}	   
%	H. Abeliovich, {\it J. Math. Biol.} {\bf 73} (2016) 1--13.  	
%\end{thebibliography}
%%%%%%%%%%%%%%%%%%%%%%%%%%%%%%%%%%%%%%%%%%%%%%%%%%%%%%%%%%%%%
%%                  The Bibliography                       %%
%%                                                         %%
%%  Bmc_mathpys.bst  will be used to                       %%
%%  create a .BBL file for submission.                     %%
%%  After submission of the .TEX file,                     %%
%%  you will be prompted to submit your .BBL file.         %%
%%                                                         %%
%%                                                         %%
%%  Note that the displayed Bibliography will not          %%
%%  necessarily be rendered by Latex exactly as specified  %%
%%  in the online Instructions for Authors.                %%
%%                                                         %%
%%%%%%%%%%%%%%%%%%%%%%%%%%%%%%%%%%%%%%%%%%%%%%%%%%%%%%%%%%%%%

% if your bibliography is in bibtex format, use those commands:
%\bibliographystyle{bmc-mathphys} % Style BST file (bmc-mathphys, vancouver, spbasic).
   % Bibliography file (usually '*.bib' )
\section*{Author's contribution}
JWRM: Proposed to consider the topic; derived the theoretical results; JWRM and FR wrote the manuscript;
%FR, NTH,JH: Verified the results
All authors: Discussed the research

\section*{Acknowledgements}
JWRM thanks KWS SAAT SE for financial support during his PhD. FR thanks Universidad del Pac\'ifico for financial support. 
\end{document}